\def\arxivversion{1} %
\documentclass[letterpaper, 10 pt, conference]{ieeeconf} 

\IEEEoverridecommandlockouts 
\usepackage{cite}[sort,compress]
\usepackage{graphicx}
\usepackage{tikz}
\usepackage{circuitikz}
\ctikzset{bipoles/length=1.2cm}
\ctikzset{font=\small}
\usepackage[caption=false,font=footnotesize]{subfig}
\usepackage{amsmath,amssymb,amsfonts}
\usepackage{mathtools, mathrsfs}
\usepackage{xcolor}
\usepackage{stfloats}
\usepackage{bm}
\usepackage{blindtext}

\newcommand{\intv}{\mathcal I}
\newcommand{\R}{\mathbb R}
\newcommand{\B}{\mathfrak B}
\newcommand{\N}{\mathbb N}

\DeclareMathOperator{\col}{col}
\DeclareMathOperator{\rank}{rank}
\newcommand{\diff}{\mathrm d}
\newcommand{\lag}{\bm{l}}
\newcommand{\inputdim}{\bm{m}}
\newcommand{\order}{\bm{n}}
\newcommand{\compl}{\bm{c}}

\newcommand{\mc}[1]{\mathcal{#1}}

\newcommand{\mr}[1]{\mathrm{#1}}
\newcommand{\mb}[1]{\mathbb{#1}}
\newcommand{\ms}[1]{\mathscr{#1}}

\newcommand{\dny}{{n_\mr{y}}}

\newcommand{\dnp}{{n_\mr{p}}}
\newcommand{\dna}{{n_\mr{a}}}

\newcommand{\dnw}{{n_\mr{w}}}

\newcommand{\dnr}{{n_\mr{r}}}
\newcommand{\dnk}{{n_\mr{k}}}
\newcommand{\dnwp}{{n_{\mr{w}'}}}

\makeatletter
\DeclareRobustCommand\vdots{%
  \mathpalette\@vdots{}%
}
\newcommand*{\@vdots}[2]{%
  \sbox0{$#1\cdotp\cdotp\cdotp\m@th$}%
  \sbox2{$#1.\m@th$}%
  \vbox{%
    \dimen@=\wd0 %
    \advance\dimen@ -3\ht2 %
    \kern.5\dimen@
    \dimen@=\wd2 %
    \advance\dimen@ -\ht2 %
    \dimen2=\wd0 %
    \advance\dimen2 -\dimen@
    \vbox to \dimen2{%
      \offinterlineskip
      \copy2 \vfill\copy2 \vfill\copy2 %
    }%
  }%
}
\DeclareRobustCommand\ddots{%
  \mathinner{%
    \mathpalette\@ddots{}%
    \mkern\thinmuskip
  }%
}
\newcommand*{\@ddots}[2]{%
  \sbox0{$#1\cdotp\cdotp\cdotp\m@th$}%
  \sbox2{$#1.\m@th$}%
  \vbox{%
    \dimen@=\wd0 %
    \advance\dimen@ -3\ht2 %
    \kern.5\dimen@
    \dimen@=\wd2 %
    \advance\dimen@ -\ht2 %
    \dimen2=\wd0 %
    \advance\dimen2 -\dimen@
    \vbox to \dimen2{%
      \offinterlineskip
      \hbox{$#1\mathpunct{.}\m@th$}%
      \vfill
      \hbox{$#1\mathpunct{\kern\wd2}\mathpunct{.}\m@th$}%
      \vfill
      \hbox{$#1\mathpunct{\kern\wd2}\mathpunct{\kern\wd2}\mathpunct{.}\m@th$}%
    }%
  }%
}
\makeatother

\usepackage{amsthm}
\theoremstyle{definition}
\newtheorem{definition}{Definition}
\newtheorem{exmp}[definition]{Example}
\newtheorem{remark}[definition]{Remark}
\theoremstyle{remark}

\theoremstyle{plain}
\newtheorem{theorem}{Theorem}
\newtheorem{lemma}[theorem]{Lemma}
\newtheorem{proposition}[theorem]{Proposition}
\newtheorem{corollary}[theorem]{Corollary}
\newtheorem{assumption}{Assumption}

\newenvironment{example}{\begin{exmp}}{\hfill$\blacktriangleleft$\end{exmp}}

\newcommand{\arxivver}[2]{%
  \ifx\arxivversion\undefined%
    #1%
  \else%
    #2%
  \fi%
}

\title{\LARGE \bf
A Continuous-Time Generalization of the LPV Fundamental Lemma
}

\author{Philipp Schmitz and Chris Verhoek%
\thanks{P.~Schmitz is with Technische Universität Ilmenau, Optimization-based Control Group, 98693 Ilmenau, Germany. C.~Verhoek is with the Dept. of Electrical and Systems Engineering at the University of Pennsylvania, Philadelphia, United States, and the Eindhoven University of Technology, Dept. of Electrical Engineering, The Netherlands.}%
\thanks{P.~Schmitz is grateful for the support from the Carl Zeiss Foundation (VerneDCt -- No. 2011640173). C.~Verhoek is gratefully supported by the European project `COVER' under grant No.~101086228.}%
\thanks{Both authors contributed equally. Corresponding author: Philipp Schmitz (e-mail address: \texttt{philipp.schmitz@tu-ilmenau.de}).}%
}

\begin{document}

\maketitle
\arxivver{}{\thispagestyle{plain}
\pagestyle{plain}}

\begin{abstract}
Controller design for physical systems \emph{directly from data} is of tremendous interest in both the industry and academia. However, the research on direct data-driven control has thus far mainly focused on discrete-time linear time-invariant systems. To make the highly non-trivial step to continuous-time nonlinear systems, this paper explores an intermediate route through the framework of \emph{linear parameter-varying} (LPV) systems. LPV systems can be used as a convenient surrogate for nonlinear systems to achieve systematic analysis and controller design. In this work, we generalize the LPV Fundamental Lemma for discrete-time systems towards a class of continuous-time LPV systems, achieving data-driven representations that can be used to design continuous-time LPV controllers.
\end{abstract}
\arxivver{}{
\begin{keywords}
Data-driven representation, linear parameter-varying (LPV) systems, Willems' fundamental lemma, continuous-time systems, data-driven control. %
\end{keywords}
}

\section{Introduction}
Direct data-driven methods can be used to extract system properties or design stabilizing controllers directly from data. Because such approaches are overcoming the costly step of identifying a mathematical description of the system itself, there is a surge for the development of such direct data-driven methods. A corner stone result in many direct data-driven analysis and control methods is the so-called \emph{Fundamental Lemma} by Willems and co-authors~\cite{willems2005note}. This important result uses behavioral system theory for discrete-time \emph{linear time-invariant} (LTI) systems~\cite{yellowbook} to formulate a fully data-based representation of the horizon-$L$ system behavior, using a single {sufficiently informative} trajectory of measurement data. The success of the Fundamental Lemma is evident from the number of data-based (LTI) methods that followed from it, such as application of the data-driven representation in simulation~\cite{markovsky2008data}, performance analysis~\cite{vanwaarde2026time}, and control design~\cite{coulson2019data, berberich2020data}. While the main developments have been focusing on discrete-time systems, novel methods have been proposed recently to obtain similar behavioral data-driven formulations of \emph{continuous-time} LTI systems~\cite{Schmitz2024a, rapisarda2023orthogonal, lopez2022continuous,  othmane2026data, Schmitz2026, Rapisarda2023, Lopez2024}. 

On the other hand, the behavioral direct data-driven methods have been generalized \emph{beyond} LTI dynamics, e.g.,~linear time-varying systems~\cite{nortmann2023direct}, bilinear systems~\cite{markovsky2022data}, {linear parameter-varying} systems~\cite{verhoek2026behavioral, verhoek2021fundamental}, feedback linearizable systems~\cite{alsalti2023data}, convex/affine/conic systems~\cite{padoan2023data}, or nonlinear systems that can represented by a basis function expansion~\cite{lazar2024basis}. While these developments are interesting and important, to our knowledge, extensions for behavioral data-driven representations \emph{beyond continuous-time LTI systems} have not been presented so far. There are works that develop data-driven representations of one-step-ahead predictors of continuous-time input--state systems~\cite{chen2025data, de2023data}. These, however, do not consider input--output data and assume that the derivative of the state can be measured, which is generally a restrictive assumption in practice.

{Extending the fundamental lemma to continuous time is not merely a matter of generality. Continuous-time data-driven methods apply directly to the dynamics of physical systems without a discretization step. This is desirable, for instance, in power electronics, where analog circuits are used to implement fast inner-loop control at low cost. Moreover, from a theoretical standpoint, we see that the discretization step of an LPV system is generally only approximate, since the dynamics depend on the scheduling trajectory within each sampling interval. A continuous-time formulation avoids this approximation.}

To make the highly non-trivial step from data-driven representations for discrete-time LTI systems to data-driven representations continuous-time nonlinear systems, we explore an intermediate route through the framework of \emph{continuous-time linear parameter-varying (LPV) systems}. The LPV framework~\cite{toth2026modeling} considers systems whose behavior is defined by a linear, dynamic relationship. The system parameters that characterize this relationship are varying along a measurable \emph{scheduling signal}, denoted by~$p$. LPV systems are often used in practice as surrogates of nonlinear dynamics, where the scheduling signal captures the nonlinearities and exogenous effects~\cite{rolandbook, toth2026modeling}. This allows for efficient and systematic analysis and control design of nonlinear systems. Behavioral data-driven representations have been developed for discrete-time LPV systems~\cite{verhoek2026behavioral, verhoek2021fundamental}. However, methods for constructing direct data-driven LPV representations of \emph{continuous-time LPV systems} are yet to be developed.

To fill this gap in the literature, our contribution is the development of data-driven representations for a special class of continuous-time LPV systems. Specifically, LPV systems that can be represented by a kernel representation that has a, what we call, \emph{Leibniz-affine} scheduling dependence. Our contributions are two-fold: First, we present a reformulation of the continuous-time Fundamental Lemma for LTI systems as an \emph{identifiability condition}, resulting in conditions on the input--output data for the construction of a data-driven representation. Second, we generalize this first contribution for the class of LPV systems with Leibniz-affine scheduling dependence through the use of a temporary LTI embedding. To our knowledge, this is the first result to formulate continuous-time behavioral data-driven representations from input--output data beyond the class of LTI systems.

The remainder is structured as follows. We formalize the setting and the considered problem in Section~\ref{s:prob}. {Section~\ref{s:lpv_behav} details LPV behaviors, their various representations and key structural quantities.} Inspired by~\cite{verhoek2026behavioral}, Section~\ref{s:ltiemb} proposes an LTI embedding technique for the considered class of LPV systems, which allows us to formulate {and prove} a continuous-time LPV Fundamental Lemma in Section~\ref{s:fl}. We draw conclusions and provide possible future research directions in Section~\ref{s:conc}. {All proofs can be found \arxivver{in~\cite{schmitzverhoek2026}.}{in the appendices.}}

\subsubsection*{Notation}
$\R$ and $\N$ denote the real and natural numbers, respectively. For compatible matrices $A_1,\dots, A_m$ we define $\col(A_1,\dots,A_m) := \begin{bmatrix} A_1^\top & \dots & A_m^\top \end{bmatrix}{}^\top$. The Kronecker product of two matrices $A$ and $B$ is denoted by $A\otimes B$. Both notations extend to matrix-valued functions. Let $\intv\subset \R$ be an open interval. The $k$\textsuperscript{th} order Sobolev space induced by the space $L^q(\intv, \R^d)$, with $q\in \N\cup \{\infty\}$, is denoted by $W^{k,q}(\intv,\R^d)$. The space of infinitely differentiable, compactly supported functions from $\intv$ to $\R^d$ is denoted by $\mathcal C^\infty_\mathrm{c}(\intv, \R^d)$. {Derivatives of a function $f$ are denoted by $\tfrac{\diff^{k}}{\diff t^{k}} f = f^{(k)}$, $k\in\N$. In this work, $f'$ denotes an augmented variable and is not to be confused with the derivative $f^{(1)}$. Finally, given two sets $\mathbb W$ and $\mathbb T$, we use the notation $\mathbb W^{\mathbb T}$ for the set of functions from $\mathbb T$ to $\mathbb W$.}

\section{Problem description}\label{s:prob}
LPV dynamical systems are defined by the quadruple $\Sigma:=(\mb{T}, \mb{W},\mb{P},\B)$, with time axis~$\mb{T}$, signal space~$\mb{W}\subseteq\R^\dnw$, scheduling space~$\mb{P}\subseteq\mb{R}^\dnp$, and behavior $\B\subseteq(\mb{W}\times\mb{P})^\mb{T}$. The behavior~$\B$ is linear in the sense that for a given~$p$, any~$(w,p),(\tilde{w},p)\in\B$, and $a, b\in\mb{R}$, $(a  w+b\tilde{w},p)\in\B$. Given $p\in{\mathbb P^\mathbb T}$, the set of all trajectories $w$ compatible with~$p$ is defined by $\B_p:=\{w\,|\, (w,p)\in\B\}$, which is a linear subspace of~$\mathbb W^\mathbb T$. {Assume throughout that~$\mb{P}$ is convex with a non-empty interior.}

To handle non-commutative differential operations in the parametrization of LPV representations, for example,~$\frac{\diff}{\diff t}p(t) w(t) \left(\neq p(t) \frac{\diff}{\diff t} w(t)\right)$, we introduce the $\diamond$-operator from~\cite{rolandbook}. For a coefficient function $r:\mb{P}^\mb{T}\to\mb{R}^{\cdot\times\cdot}$ and scheduling signal~$p\in\mb{P}^\mb{T}$, the notation~$r\diamond p$ allows us to consider scheduling-dependent parameterizations that are dependent on a finite number of derivatives of~$p$, i.e., 
\[(r\diamond p)(t) = r(p, \tfrac{\diff p}{\diff t}, \dots, \tfrac{\diff^n p}{\diff t^n}).\]
See~\cite[Chap.~3]{rolandbook} for a more detailed explanation. Using this operator, we can compactly write polynomials in the indeterminate~$\xi$, whose coefficients are dependent on (derivatives of) the scheduling variable, e.g.,~$R(\xi)\diamond p=(r_1\diamond p)\xi^2 + (r_2\diamond p)\xi = (3p + 2\tfrac{\diff p}{\diff t})\xi^2 + (\sin(p) + \tfrac{\diff^2 p}{\diff t^2})\xi$. We are now ready to introduce LPV kernel representations, which are used to characterize the behavior~$\B$. 

We consider continuous-time LPV systems that can be described by the kernel representation
\begin{equation}
   \label{eq:kernel-gen}
    (R(\tfrac{\diff}{\diff t})\diamond p)w = 0
\end{equation}
on an open finite time interval $\intv$, i.e., $\mb{T}=\intv$. Here, $p:\intv\to\R^{\dnp}$ is the scheduling signal and $w:\intv\to\R^{\dnw}$ is the system trajectory. The dynamics of the LPV system are characterized by the polynomial matrix function $R(\xi)$ in the indeterminant~$\xi$, where~$\xi^n$ corresponds to the $n$\textsuperscript{th} time derivative. 

We restrict ourselves to a specific type of dynamic, functional scheduling dependence for $R$. We assume that the coefficients of~$R$ are affinely dependent on $R$, and follow a Leibniz-like dynamic dependence. Hence,~$R$ is such that we can write~\eqref{eq:kernel-gen} as
\begin{equation}\label{eq:kernel}
    (R(\tfrac{\diff}{\diff t})\diamond p)w =\sum_{i=0}^{\dnr}\Big( R_{i,0} \frac{\diff^i w}{\diff t^i} + \sum_{j = 1}^{\dnp} R_{i,j} \frac{\diff^i (p_j w)}{\diff t^i}\Big)=0,
\end{equation}
where $R_{i,j}\in\R^{\dnk \times \dnw}$. We call this dependency structure \emph{Leibniz-affine}, since the dynamic and functional scheduling dependence arises from derivatives of only the product of the manifest and scheduling signals, excluding any isolated mixed derivative terms. While restrictive compared to the general framework presented in~\cite{rolandbook}, the class of Leibniz-affine captures a relatively large class of physical systems, e.g., RC circuits with varying capacitance (see Example~\ref{example}) or mass-spring-damper systems with velocity-dependent damping coefficients. Moreover, as shown in
\arxivver{\cite[App.~C]{schmitzverhoek2026},}{Appendix~\ref{app:proof:lem:complexity},}
systems with a Leibniz-affine scheduling dependence admit a direct state-space realization with static affine dependence, allowing for convex, linear matrix inequality-based analysis and control synthesis.
{\begin{example}\label{example}
\begin{figure}
    \centering
    \begin{tikzpicture}[european, scale=0.7]
    \node[ocirc] (A) at (0,3) {};
    \node[ocirc] (B) at (0,0) {};
    \draw (A) to[open, v=$u(t)$, voltage=straight] (B);
    \draw (A.east) to[R, l=$R_0$, -*] (3,3) -- (5,3) node[ocirc] (C) {};
    \draw (3,3) to[vC, l_=$C(t)$, *-*] (3,0);
    \draw (B.east) -- (3,0) -- (5,0) node[ocirc] (D) {};
    \draw (C) to[open, v^=$y(t)$, voltage=straight] (D);
    \end{tikzpicture}
    \caption{RC-circuit with variable capacitance.}
    \label{fig:RC}
\end{figure}
    An example of a Leibniz-affine LPV system of the form~\eqref{eq:kernel} is the RC-circuit in Fig.~\ref{fig:RC}. Define the input and output voltage $u$, $y$, constant resistance $R_0$ and a variable capacitance $C(t)$. Let $C(t):=C_0 + C_1p(t)$, with~$p$ the scheduling variable. Then, Kirchhoff's voltage law gives:
    \[R_0C_0y^{(1)} + R_0C_1(py)^{(1)} + y - u = 0.\]
    With $w=\col(u,y)$, we thus obtain the form~\eqref{eq:kernel}:
    \[  \begin{bsmallmatrix} -1 & 1 \end{bsmallmatrix}w + \begin{bsmallmatrix} 0 & R_0C_0 \end{bsmallmatrix}w^{(1)} + \begin{bsmallmatrix} 0 & 0 \end{bsmallmatrix}pw + \begin{bsmallmatrix} 0 & R_0C_1 \end{bsmallmatrix}(pw)^{(1)} = 0, \]
    which has Leibniz-affine scheduling dependence.
\end{example}
In this paper, we are interested in {LPV behaviors} that are characterized by~\eqref{eq:kernel}.
We investigate the conditions under which these can be fully characterized on the interval~$\intv$ by only measurement data. %

Before we present the solution to this problem, we need to discuss the solutions and representation variations of~\eqref{eq:kernel}.
\section{Leibniz-affine LPV behaviors \& representations}
\label{s:lpv_behav}

{In this section, we define the behavior associated with~\eqref{eq:kernel} as a function space. Further, we recall structural quantities of behaviors and alternative representation forms.
\subsection{Kernel Representations via Weak Solutions}
We characterize the LPV behavior of~\eqref{eq:kernel}, i.e., the set of all valid pairs of trajectories and scheduling signals of~\eqref{eq:kernel}, in a weak sense. Specifically, we further define~$\B$ as:}
\begin{equation}\label{eq:behavior}
   \B := \left\{(w,p)\in L^2(\intv, \R^{\dnw}) \times L^\infty(\intv, \R^{\dnp})\,\middle|\,\begin{aligned}
    &\text{\eqref{eq:kernel} holds}\\[-.2em]
    &\text{weakly}
   \end{aligned} \right\}.
\end{equation}
A pair~$(w,p)\in L^2(\intv, \R^{\dnw}) \times L^\infty(\intv, \R^{\dnp})$ is called a weak solution to~\eqref{eq:kernel} if for all $\phi\in\mathcal C_{\mathrm c}^\infty(\intv,\R^{\dnk})$, it holds that 
\begin{equation}
   \label{eq:weak}
   \int_\intv \bigg(\sum_{k=0}^{\dnr} (-1)^{k} \Big(w R_{k,0}^\top  + \sum_{j = 1}^{\dnp}  p_jw R_{k,j}^\top \Big) \tfrac{\diff^k \phi}{\diff t^k}\bigg) \diff t = 0.
\end{equation}
{Given a pair $(w,p)$ of infinitely differentiable, bounded functions, the equivalence of \eqref{eq:kernel} and \eqref{eq:weak} follows immediately from integration by parts of~\eqref{eq:weak}, and the fact that the space of test functions $\mathcal C_\mathrm{c}^\infty(\intv, \R^{\dnk})$ is dense in $L^2(\intv, \R^{\dnk})$.}
{This allows us to conclude the following:}
\begin{lemma}
\label{lem:closed}
      The behavior $\B$ defined as in~\eqref{eq:behavior} is a closed subset of $L^2(\intv, \R^{\dnw}) \times L^\infty(\intv, \R^{\dnp})$.
\end{lemma}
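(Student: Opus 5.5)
We take a sequence $(w_n,p_n)\in\B$ converging to $(w,p)$ in the product topology of $L^2(\intv,\R^{\dnw})\times L^\infty(\intv,\R^{\dnp})$, and show that the limit again satisfies the weak formulation~\eqref{eq:weak}. Equivalently, for each fixed test function $\phi\in\mathcal C_{\mathrm c}^\infty(\intv,\R^{\dnk})$ we consider the functional
\[
F_\phi(w,p) := \int_\intv \sum_{k=0}^{\dnr} (-1)^{k} \Big(w R_{k,0}^\top + \sum_{j=1}^{\dnp} p_j w R_{k,j}^\top\Big)\tfrac{\diff^k \phi}{\diff t^k}\,\diff t .
\]
We then observe that $\B=\bigcap_{\phi} F_\phi^{-1}(\{0\})$. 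It therefore suffices to prove that each $F_\phi$ is continuous on the product space, since an intersection of preimages of the closed set $\{0\}$ under continuous maps is closed.

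\textbf{Continuity of the linear terms.} Since $\phi$ is smooth with compact support in $\intv$ and $\intv$ is finite, each $\phi^{(k)}$ lies in $L^2(\intv,\R^{\dnk})$. By Cauchy--Schwarz, $w\mapsto\int_\intv wR_{k,0}^\top\phi^{(k)}\,\diff t$ is a bounded linear functional on $L^2$.

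\textbf{Continuity of the bilinear terms.} This is the only point requiring care. We estimate
\[
\Big|\int_\intv (p_{n,j}w_n - p_j w) R_{k,j}^\top\phi^{(k)}\diff t\Big|
\le \|R_{k,j}\|\,\|\phi^{(k)}\|_{L^2}\big(\|p_{n,j}-p_j\|_{L^\infty}\|w_n\|_{L^2} + \|p_j\|_{L^\infty}\|w_n-w\|_{L^2}\big),
\]
using $p_{n,j}w_n-p_jw=(p_{n,j}-p_j)w_n+p_j(w_n-w)$ and Hölder's inequality in the form $\|fg\|_{L^2}\le\|f\|_{L^\infty}\|g\|_{L^2}$. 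The sequence $\|w_n\|_{L^2}$ is bounded because it converges, so the right-hand side tends to zero. This is exactly where the $L^\infty$ topology on the scheduling is needed: with only $L^2$ convergence of $p$, the product $p_jw$ need not converge. Hence $F_\phi(w_n,p_n)\to F_\phi(w,p)$.

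\textbf{Conclusion.} Since $F_\phi(w_n,p_n)=0$ for all $n$ and every $\phi$, the limit satisfies $F_\phi(w,p)=0$ for every $\phi$, so $(w,p)\in\B$. The product space is metric, so sequential closedness is equivalent to closedness, which gives the claim. There is no real obstacle beyond checking the bilinear estimate.
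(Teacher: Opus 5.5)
Your proof is correct and follows the paper's argument: take a convergent sequence in $\B$, show the weak-form functional $f_\phi$ is continuous for each test function $\phi$, and pass to the limit, with you also supplying the H\"older estimate for the bilinear term that the paper leaves implicit. One aside is overstated: the $L^\infty$ topology on $p$ is convenient but not essential here, since $L^2$ convergence of both $p_n$ and $w_n$ already gives $p_{n,j}w_n\to p_jw$ in $L^1$, which suffices when integrating against the bounded functions $\phi^{(k)}$.
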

Choosing $\B$ as a subset of $L^2(\intv,\R^{\dnw})\times L^\infty(\intv, \R^\dnp)$ is advantageous because, by Lemma~\ref{lem:closed}, it ensures that $\B$ is a Banach space, providing a robust framework of analytical treatment. Alternatively, the behavior can be defined more broadly using spaces of locally integrable functions as discussed in \cite{rolandbook}.

{In the remainder, when we refer to~$\B$, we always mean the LPV behavior as defined in~\eqref{eq:behavior}. Next, we will discuss some characteristic structural quantities of $\B$, referred to as the \emph{complexity}.}

\subsection{Complexity of LPV Behaviors}
The {complexity} of an LPV behavior~$\B$ is characterized by the triple of integers
\begin{equation}
   \compl(\B) = (\order(\B), \lag(\B), \inputdim(\B)).
\end{equation}
The integer $\order(\B)$ is the \emph{order} of~$\B$, corresponding to the smallest possible state dimension of a state-space representation with \emph{static scheduling dependence} whose manifest behavior coincides with~$\B$. The integer~$\lag(\B)$ is the \emph{lag} of~$\B$, and is the smallest possible $\dnr$ over all kernel representations of the form~\eqref{eq:kernel} that represent~$\B$. The integer~$\inputdim(\B)$ is the \emph{input dimension}, which is defined as follows: If for a partitioning $w=\col(u,y)$, $u$ is such that it is maximally free, i.e., none of the components of~$y$ can be chosen freely for all~$p$, then~$\inputdim(\B)$ corresponds to the dimension of $u$. These quantities are all well-defined in~\cite[Chap.~3]{rolandbook}, and are closely linked to minimality properties of the representations of~$\B$, e.g., a minimal~\eqref{eq:kernel} has $\dnk = \dnw-\inputdim(\B)$.

\subsection{Input--Output Representations}
When considering a partitioning of~$w$ in terms of inputs~$u$ and outputs~$y$, it is natural to introduce input--output of~\eqref{eq:kernel}. For input--output representations, we pick a partitioning\footnote{In classical behavioral theory~\cite{yellowbook}, the partition of~$w$ is generally defined by a full-rank permutation matrix~$\Pi$, such that~$w=\Pi\col(u,y)$. Throughout, we take w.l.o.g.~$\Pi=I$.} $w=\col(u,y)$, such that~$u:\intv\mapsto\R^{\inputdim(\B)}$ is maximally free and $y:\intv\mapsto\R^{\dny}$, where $\dny:=\dnw-\inputdim(\B)$. We then split up $R_{i,j} = \begin{bmatrix} -B_{i,j} & A_{i,j} \end{bmatrix}$ accordingly, where $A_{i,j}\in\R^{\dnk\times\dny}$ and $B_{i,j}\in\R^{\dnk\times\inputdim(\B)}$,  such that we obtain
\begin{multline}\label{eq:io}
    \textstyle \sum_{i=0}^{\dnr}\Big( A_{i,0} \tfrac{\diff^i y}{\diff t^i} + \sum_{j = 1}^{\dnp} A_{i,j} \tfrac{\diff^i (p_j y)}{\diff t^i}\Big) \\
    \textstyle=\sum_{i=0}^{\dnr}\Big( B_{i,0} \tfrac{\diff^i u}{\diff t^i} + \sum_{j = 1}^{\dnp} B_{i,j} \tfrac{\diff^i (p_j u)}{\diff t^i}\Big),
\end{multline}
which is unique up to permutations. Note that we can write~\eqref{eq:io} in the short form $(A(\frac{\diff}{\diff t})\diamond p)\,y=(B(\frac{\diff}{\diff t})\diamond p)\,u$ if desired. We assume the following throughout the paper when we consider the input--output realization~\eqref{eq:io}:\footnote{While left-coprimeness of $A(\xi)\diamond p$ and $B(\xi)\diamond p$ can always be achieved, requiring $A_{\dnr,0}=I$ is restrictive. We believe that this assumption can be lifted without affecting the validity of the results that depend on it. However, proving the results without this restriction on $A_{\dnr,0}$ will require extensive and deep discussions on bridging LPV and LTI realization theory in the continuous-time behavioral setting. As this is not the focus of this paper, we leave this to future work.}
\begin{assumption}\label{ass:coprime-etc}
    The polynomials $A(\xi)\diamond p$ and $B(\xi)\diamond p$ defining~\eqref{eq:io} are generically coprime in $p$, i.e., their greatest common divisor is one for generic $p\in\mathbb{P}^\mathbb{T}$. Moreover,
    the leading coefficient $A_{\dnr,0}=I_{\dny}$ and $A_{\dnr,j}=0$ for $j>0$.
\end{assumption}
\begin{remark}
The special case of~\eqref{eq:io} where $\dnr=1$, $A_{1,0}=I$, and $A_{1,j}=0$ and $B_{1,j-1}=0$ for $j>0$ yields the well-known LPV input--state realization with affine scheduling dependence:
\begin{equation*}
    \tfrac{\diff }{\diff t} y = -A_{0,0}y - \sum_{j = 1}^{\dnp} A_{0,j} p_j y + B_{0,0}u + \sum_{j = 1}^{\dnp} B_{0,j} p_j u.
\end{equation*}
\end{remark}

With the formal definition of~$\B$, the complexity, and input--output representations, we now introduce LTI embeddings of LPV behaviors. These embeddings allow us to formulate our main result: data-driven representations of continuous-time LPV systems.

\section{LTI embeddings of LPV behaviors}\label{s:ltiemb}
Inspired by the discrete-time LPV Fundamental Lemma~\cite{verhoek2026behavioral}, we achieve our generalization to the continuous-time case through a (temporary) LTI embedding of~$\B$. The embedding is characterized by auxiliary input signals. Given $(w,p)$, we introduce
\begin{equation}
   \label{eq:augment}
   w' := \begin{bmatrix}
      w\\p\otimes w
   \end{bmatrix} = \begin{bmatrix}
      w\\p_1w\\\vdots\\p_{\dnp}w
   \end{bmatrix}.
\end{equation}
If we now \emph{temporarily disregard} the dependence on~$p$ and consider $p_1w, \dots,p_\dnp w$ as auxiliary input signals, i.e., they are considered to be free signals, we can define the augmented behavior $\B'$ as
\begin{equation}
   \B' = \left\{w'\in L^2(\intv, \R^{\dnw(1+\dnp)})\,\middle|\, \begin{aligned} & R'(\tfrac{\diff}{\diff t})w'=0\\ &\text{holds weakly}\end{aligned}\right\},
\end{equation}
with $R'_i = \begin{bmatrix} R_{i,0} & \cdots & R_{i,\dnp}\end{bmatrix}\in\R^{\dnk\times\dnw(1+\dnp)}$, and where $R'(\tfrac{\diff}{\diff t})w'=0$ is to understood in again in the weak sense. That is, we say that~$w'$ is a weak solution to $R'(\tfrac{\diff}{\diff t})w'=0$ if for all $\phi\in\mathcal C_{\mathrm c}^\infty(\intv,\R^\dnk)$, it holds that
\begin{equation}
   \label{eq:weak2}
   \int_\intv \sum_{k=0}^{\dnr} (-1)^{k} w' (R_{k}')^\top \tfrac{\diff^k \phi}{\diff t^k}\ \diff t = 0.
\end{equation}
The following lemma establishes the formal relationship between $\B$ and $\B'$.
\begin{lemma}
   \label{lem:embedding}
   {A pair $(w,p)\in L^2(\intv, \R^{\dnw}) \times L^\infty(\intv, \R^{\dnp})$  belongs to the LPV behavior $\B$ if and only there exists $w' \in \B'$ satisfying $w'=\col(w,p\otimes w)$}.
\end{lemma}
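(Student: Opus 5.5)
The plan is to unfold both weak formulations, \eqref{eq:weak} and \eqref{eq:weak2}, and check that they coincide term by term once $w'=\col(w,p\otimes w)$ is substituted. The only additional ingredient is that $w'$ has the required regularity, namely $w'\in L^2(\intv,\R^{\dnw(1+\dnp)})$.

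\emph{Step 1 (regularity).} Take $w\in L^2(\intv,\R^{\dnw})$ and $p\in L^\infty(\intv,\R^{\dnp})$. Then each product $p_jw$ satisfies $\|p_jw\|_{L^2}\le \|p_j\|_{L^\infty}\|w\|_{L^2}<\infty$. Hence $p\otimes w\in L^2$, and therefore $w':=\col(w,p\otimes w)\in L^2(\intv,\R^{\dnw(1+\dnp)})$. This makes $w'$ an admissible candidate for $\B'$. It also guarantees that every integrand in \eqref{eq:weak} and \eqref{eq:weak2} is integrable against the bounded, compactly supported $\tfrac{\diff^k\phi}{\diff t^k}$.

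\emph{Step 2 (identity of the integrands).} Use the block structure $R_k'=\begin{bmatrix}R_{k,0}&\cdots&R_{k,\dnp}\end{bmatrix}$. This gives pointwise
\[
w'(R_k')^\top = w R_{k,0}^\top+\sum_{j=1}^{\dnp} p_jw\,R_{k,j}^\top,
\]
where the row/column conventions are those already used in \eqref{eq:weak}. Multiplying by $(-1)^k\tfrac{\diff^k\phi}{\diff t^k}$ and summing over $k$ shows that, for every test function $\phi\in\mathcal C^\infty_{\mathrm c}(\intv,\R^{\dnk})$, the left-hand side of \eqref{eq:weak2} equals that of \eqref{eq:weak}.

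\emph{Step 3 (both directions).} For ``$\Rightarrow$'', suppose $(w,p)\in\B$. Define $w'$ as in Step 1. By Step 2, \eqref{eq:weak2} holds for all $\phi$, so $w'\in\B'$. For ``$\Leftarrow$'', suppose some $w'\in\B'$ equals $\col(w,p\otimes w)$. Then \eqref{eq:weak2} holds for all $\phi$, and by Step 2 so does \eqref{eq:weak}. Since $(w,p)$ lies in the stated function space by assumption, this gives $(w,p)\in\B$.

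The only step needing care is Step 1, together with the correct reading of the transposes and products in the weak forms; everything else is bookkeeping. There is no genuine obstacle here, because the Leibniz-affine structure is exactly what makes the weak formulation linear in $w'$.
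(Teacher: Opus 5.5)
Your proposal is correct and follows the paper's own proof essentially verbatim. The paper also notes that $w'=\col(w,p\otimes w)\in L^2$ and observes that \eqref{eq:weak} and \eqref{eq:weak2} coincide for every test function, which gives both directions; your bound $\|p_jw\|_{L^2}\le\|p_j\|_{L^\infty}\|w\|_{L^2}$ and your block identity for $w'(R_k')^\top$ spell out details the paper leaves implicit.
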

By identifying each pair $(w,p)\in\B$ with its augmented trajectory $w'=\col(w,p\otimes w)\in\B'$, the LPV behavior $\B$ can be viewed as a subset of the augmented LTI behavior~$\B'$, i.e., with a slight abuse of notation we conclude $\B\subseteq\B'$.
However, because we assumed that $p_1w, \dots,p_\dnp w$ are free auxiliary input signals, we in fact have $\B\subset\B'$. Hence, 
the augmented behavior $\B'$ defined by the LTI embedding is an \emph{over-approximation} of~$\B$, which we illustrated in Figure~\ref{fig:behaviors}. 
\begin{figure}[t]
  \centering
  \subfloat[]{%
    \begin{tikzpicture}
  \definecolor{mypink}{rgb}{0.95,0.80,0.80}
  \draw[fill=mypink, draw=black, rounded corners=7pt, line width=0.7pt]
    (-0.7,-0.7) rectangle (0.7,0.7);
  \node at (0,0) {\large$\mathfrak{B}$};
  \draw[line width=0.7pt] (-1.3,0.28) -- (-0.7,0.28);
  \filldraw[draw=black,fill=white,line width=0.6pt] (-1.3,0.28) circle (2.2pt);
  \node[anchor=east,font=\small] at (-1.38,0.28) {$w$};
  \draw[line width=0.7pt] (-1.3,-0.28) -- (-0.7,-0.28);
  \filldraw[draw=black,fill=white,line width=0.6pt] (-1.3,-0.28) circle (2.2pt);
  \node[anchor=east,font=\small] at (-1.38,-0.28) {$p$};
\end{tikzpicture}%
  }\hfill
  \subfloat[]{%
    \begin{tikzpicture}
  \definecolor{myyellow}{rgb}{0.96,0.95,0.74}
  \draw[fill=myyellow, draw=black, rounded corners=7pt, line width=0.7pt]
    (-0.7,-0.7) rectangle (0.7,0.7);
  \node at (0,0) {\large$\mathfrak{B}'$};
  \draw[line width=0.7pt] (-1.3,0) -- (-0.7,0);
  \filldraw[draw=black,fill=white,line width=0.6pt] (-1.3,0) circle (2.2pt);
  \node[anchor=east,font=\small] at (-1.38,0) {$w'$};
\end{tikzpicture}%
  }\hfill\hspace{10pt}
  \subfloat[]{%
    \begin{tikzpicture}[scale=0.43]
\definecolor{mypink}{rgb}{0.95,0.80,0.80}
\definecolor{myyellow}{rgb}{0.96,0.95,0.74}
\fill[myyellow, draw=black, line width=0.7]
  plot[smooth cycle, tension=0.7] coordinates {
    (0,0)
    (.2,1.2)
    (2.0,1.3)
    (3,0.6)
    (3,-0.6)
    (2.4,-1.6)
    (1.2,-2)
    (0,-1.6)
    (-0.6,-0.6)
  };
\begin{scope}[shift={(0.1,-0.3)}]
\fill[mypink, draw=black, line width=0.7]
  plot[smooth cycle, tension=0.7] coordinates {
    (1.1,0.0)
    (1.6,0.7)
    (2.1,1.0)
    (2.6,0.6)
    (2.6,-0.5)
    (2,-1.2)
    (1.2,-1.4)
    (0.4,-1.1)
    (0,-0.5)
  };
  \end{scope}

\node at (0.8,0.5) {\large$\mathfrak{B}'$};
\node at (1.7,-0.8) {\large$\mathfrak{B}$};

\end{tikzpicture}%
  }
  \caption{(a)~The LPV behavior $\B$ with scheduling signal $p$ and manifest variable $w$. (b)~The augmented LTI behavior $\B'$ with variable $w'$, composed of $w$ augmented with $\dnp\dnw$ auxiliary input signals. (c)~Illustration of the embedding $\B\subset\B'$. (Figure adopted from~\cite{verhoek2026behavioral}).}
  \label{fig:behaviors}
\end{figure}

Similar to the LPV behavior~$\B$, the complexity of the LTI behavior $\B'$ (see, e.g.,~\cite{markovsky2022identifiability}) is characterized by the triple
\begin{equation}
   \compl(\B') = (\order(\B'), \lag(\B'), \inputdim(\B')),
\end{equation}
where $\order(\B')$ is the McMillan degree of $R'$ or, equivalently, the state dimension of a minimal state space representation, $\lag(\B')$ is the lag of $\B'$, and $\inputdim(\B')$ is the input dimension of $\B'$, see \cite{yellowbook, markovsky2022identifiability}.
With the LTI embedding of~\eqref{eq:kernel} characterized by~\eqref{eq:augment}, the values that define the complexity of the augmented behavior~$\B'$ are \emph{inherited} from $\compl(\B)$. We show this in the following result:

\begin{lemma}
   \label{lem:complexity}
    Let Assumption~\ref{ass:coprime-etc} hold. The relation between the complexities of~$\B$ and~$\B'$ is
        $\order(\B)=\order(\B')$, $\lag(\B) = \lag(\B')$ and $\inputdim(\B) + \dnp \dnw=\inputdim(\B')$.
\end{lemma}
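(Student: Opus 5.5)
The plan is to work with the input--output form~\eqref{eq:io} under Assumption~\ref{ass:coprime-etc} and read off the LTI structure of $\B'$ directly. Writing $w'=\col(w,p\otimes w)$ and splitting each block into input and output parts, $R'(\tfrac{\diff}{\diff t})w'=0$ becomes
\[
\textstyle A_0(\tfrac{\diff}{\diff t})\,y=\sum_{j}B_{j}(\tfrac{\diff}{\diff t})\,u_j' - \sum_{j\ge 1}A_j(\tfrac{\diff}{\diff t})\,y_j',
\]
where $A_j(\xi)=\sum_i A_{i,j}\xi^i$, $B_j(\xi)=\sum_i B_{i,j}\xi^i$, $u_0'=u$, $u_j'=p_ju$ and $y_j'=p_jy$. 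Since $A_{\dnr,0}=I$, the matrix $A_0(\xi)$ is square with $\det A_0$ of degree $\dny\dnr$, hence nonsingular. Moreover, $A_{\dnr,j}=0$ for $j\ge 1$, so the transfer matrix from $(u,p_1u,\dots,p_\dnp u,p_1y,\dots,p_\dnp y)$ to $y$ is proper. Consequently the free variables of $\B'$ are exactly these $\inputdim(\B)+\dnp\dnw$ signals and $y$ is the output. This gives $\inputdim(\B')=\inputdim(\B)+\dnp\dnw$ as the rank deficiency of $R'$, since $R'(\xi)$ has full row rank $\dny$ because of the block $A_0$. I would also note that $\inputdim(\B)$ is the dimension of $u$ in the chosen maximally free partition.

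For the lag, the representation $R'$ has degree $\dnr$ and is row reduced, because its leading row coefficient matrix contains $A_{\dnr,0}=I$. A row-reduced kernel representation attains the minimal lag, so $\lag(\B')=\dnr$. Conversely, any Leibniz-affine kernel representation of $\B$ of degree $\dnr'$ induces, via Lemma~\ref{lem:embedding}, an LTI representation of $\B'$ of the same degree. Combining these, $\lag(\B)=\lag(\B')$ follows once we know that the given $\dnr$ is already minimal for $\B$. I would argue this minimality by contradiction: a lower-degree Leibniz-affine representation of $\B$ would define an LTI behavior containing $\B$. Because generic $p$ makes the auxiliary signals span everything (the closure argument from Lemma~\ref{lem:closed}), that behavior would also contain $\B'$, contradicting row reducedness of $R'$.

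For the order, the plan is to construct an explicit realization in observer canonical form. Because $A_{\dnr,j}=0$ for $j\ge 1$ and $A_{\dnr,0}=I$, the Leibniz-affine structure lets one define states $x_k$ by repeated integration of $\sum_j(B_{i,j}p_ju-A_{i,j}p_jy)$. The resulting system $\dot x=\mathcal A x+\sum_j(\mathcal B_j p_j u+\mathcal K_j p_j y)$, $y=\mathcal C x+D\ldots$, is a static-affine LPV realization of $\B$ with state dimension $n$. Read as an LTI system with inputs $w'$, the same equations form a realization of $\B'$. Hence $\order(\B)\le n$ and $\order(\B')\le n$, and both equal $n$ if the realization is minimal in each sense. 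For $\B'$, minimality means observability plus controllability from $w'$. Observability follows from the observer form. Controllability follows from left coprimeness of $[A_0\ \ B_0\ \ A_j\ \ B_j]$, which is inherited from the generic coprimeness in Assumption~\ref{ass:coprime-etc}.

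The main obstacle is this last step: transferring minimality, i.e.\ showing $\order(\B)=\order(\B')$ and that no smaller static-affine LPV realization of $\B$ exists. I would first try to show that any static-affine LPV state-space realization of $\B$ with state dimension $n''$ yields, after substituting the products $p_jx$ appropriately, an LTI realization of some behavior containing $\B'$ with state dimension $n''$. The density/closure argument would then force it to realize $\B'$, giving $n''\ge\order(\B')$. The delicate point is that static dependence on $p$ in the output map may not lift. If this fails, I would instead rely on the LPV minimality characterization in~\cite[Chap.~3]{rolandbook} (observability and reachability of the realization for generic $p$), which the generic coprimeness assumption should deliver.
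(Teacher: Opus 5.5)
Your input-dimension count (full row rank $\dny$ of $R'$ via the block $A_0(\xi)$) is fine. So is $\lag(\B')=\dnr$ from row reducedness of $R'$. The observer-form realization, which is the paper's cut-and-shift realization, correctly gives $\order(\B),\order(\B')\le\dny\dnr$.

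The gap is in the two lower bounds $\lag(\B)\ge\dnr$ and $\order(\B)\ge\dny\dnr$, i.e.\ in showing that no smaller Leibniz-affine kernel or static-affine state representation of $\B$ exists. For the lag you need that every Leibniz-affine representation $\tilde R$ of $\B$ satisfies $\ker\tilde R'\supseteq\B'$. You justify this by ``the auxiliary signals span everything'', but neither cited lemma gives it. Lemma~\ref{lem:closed} only says $\B$ is closed. Lemma~\ref{lem:embedding} relates $\B$ to the embedding of its \emph{own} $R$, while a different representation of the same $\B$ induces a different LTI behavior.

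The claim is in fact false in general. Take $\dnp=1$, $\dnr=0$, $y=b_0u$, which satisfies Assumption~\ref{ass:coprime-etc}. Then $\B'$ leaves the slots $v_u,v_y$ of $pu,py$ free, yet every embedded trajectory satisfies $v_y-b_0v_u=p(y-b_0u)=0$. So the closed span of the embedded trajectories is a proper subspace of $\B'$. Moreover, if $\mb P\subseteq[0,\infty)$, then $(1+p)(y-b_0u)=0$ is another Leibniz-affine representation of $\B$ whose embedding $y-b_0u+v_y-b_0v_u=0$ does not contain $\B'$. The lemma is trivially true in this example, but your mechanism fails, and your first route for the order inherits the same flaw.

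For the order you explicitly leave the decisive step open. In the behavioral setting, an input--state--output representation is state-minimal once it is observable (and state trim). Controllability or reachability is neither needed nor generally available, so requiring it for $\B'$ and in your fallback route is misplaced. Observability of the observer form already gives $\order(\B')=\dny\dnr$ with observability index $\dnr$.

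The ingredient the paper uses to close the LPV side is that, under Assumption~\ref{ass:coprime-etc}, the observability matrix of this common realization has scheduling-independent rank. For your observer form this is immediate, since $\mathcal O_{\dnr}$ is block triangular with identity diagonal blocks for every $p$. Hence the realization is observable for every $p$, including $p\equiv0$. LPV realization theory \cite{petreczky2023minimal} then identifies the dimension of its constant-projection Kalman reduction with $\order(\B)$, and its observability index with $\lag(\B)$. Reading the same matrices as an LTI system with $p\otimes w$ free gives an observable realization of $\B'$ with the same dimension and index. This yields both equalities at once and makes your separate density argument for the lag unnecessary.
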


With the LTI embedding and its complexity~$\compl(\B')$ defined, we now present the main results of our paper.

\section{A Fundamental Lemma}\label{s:fl}
In this section, we present a data-driven representation for continuous-time LPV systems with behavior~$\B$. We build on the continuous-time Fundamental Lemma for LTI systems in~\cite{Schmitz2024a} and adopt a reformulation tailored to our setting. Specifically, we first present a continuous-time analog of the discrete-time identifiability condition, as presented in~\cite[Cor.~21]{markovsky2022identifiability}.
\subsection{An LTI Fundamental Lemma}
The continuous-time Fundamental Lemma in~\cite{Schmitz2024a} is formulated as an input-design result. This provides \emph{sufficient, but a priori,} conditions on the input data to determine whether the data is able to represent the behavior. In the following, we provide \emph{necessary and sufficient, but a posteriori}, conditions on the input--output data to determine whether the data can represent the behavior. This provides us with the tools to formulate a continuous-time LPV Fundamental Lemma, and draw parallels with the discrete-time LPV Fundamental Lemma~\cite[Thm.~1]{verhoek2026behavioral}.

\begin{proposition}
   \label{prop:FL_LTI}
   Define $\dnwp := \dnw(1+\dnp)$, and let $L\geq\lag(\B')+1$. Let $w'\in\B'\cap W^{L-1,2}(\intv,\R^{L\dnwp\times L\dnwp})$ and define the Gramian matrix
   \begin{equation}
      \label{eq:gamma}
      \Gamma = \int_\intv \mathcal W(t)\mathcal W(t)^\top\,\diff t,\quad \mathcal W=\begin{bmatrix}
         w'\\\vdots\\(w')^{(L-1)}
      \end{bmatrix}.
   \end{equation}
   Then, the following statements are equivalent:
   \begin{enumerate}
       \item[(i)] $\rank(\Gamma) = L\inputdim(\B') + \order(\B')$,
       \item[(ii)] $\bar w'\in W^{L-1,2}(\intv,\R^{\dnwp})$ is a trajectory of $\B'$ if and only if there exists $g\in L^2(\intv,\R^{L\dnwp})$ such that
   \begin{equation}
      \label{eq:FL_LTI}
      \begin{bmatrix}
         \bar w'\\
         \vdots\\
         (\bar w')^{(L-1)}
      \end{bmatrix} = \Gamma g.
   \end{equation}
   \end{enumerate}
\end{proposition}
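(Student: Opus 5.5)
The plan is to reduce both statements to a comparison of two finite-dimensional subspaces of $\R^{L\dnwp}$. The first is the \emph{restricted behavior space}
\[
\mathcal S := \bigl\{\col(\bar w'(t_0),\dots,(\bar w')^{(L-1)}(t_0)) \,\big|\, \bar w'\in\B'\cap W^{L-1,2}\bigr\},
\]
the set of jets of length $L$ at a point $t_0\in\intv$. The second is $\operatorname{im}\Gamma$. I will first show that $\mathcal S$ does not depend on $t_0$ and that $\dim\mathcal S = L\inputdim(\B')+\order(\B')$ whenever $L\ge\lag(\B')+1$. This is the continuous-time analogue of the dimension formula for the restricted behavior in~\cite{markovsky2022identifiability}.

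To prove the dimension formula, write the minimal kernel $R'$ in input--output form and pass to a minimal state-space realization $\dot x = Ax+Bu'$, $y'=Cx+Du'$, with state dimension $\order(\B')$ and input dimension $\inputdim(\B')$. A jet $\col(\bar w',\dots,(\bar w')^{(L-1)})(t_0)$ is then the image of $(x(t_0),u'(t_0),\dots,(u')^{(L-1)}(t_0))$ under the block matrix built from $\col(C,CA,\dots,CA^{L-1})$ and the Toeplitz matrix of Markov parameters. Since $L\ge\lag(\B')+1$ is at least the observability index, this map is injective, and $\dim\mathcal S=L\inputdim(\B')+\order(\B')$ follows. Every point of $\R^{\order+L\inputdim}$ is attained by choosing smooth $u'$ with the prescribed derivatives, so $\mathcal S$ is the full image. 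Time-invariance makes $\mathcal S$ independent of $t_0$. A kernel description also holds: $\mathcal S=\ker\mathcal R$, where $\mathcal R$ is the matrix obtained by stacking shifted copies of $[R_0'\ \cdots\ R_{\dnr}']$. Because the equations are polynomial identities on the derivatives, this gives $\mathcal W(t)\in\mathcal S$ for almost every $t$ whenever $w'\in\B'\cap W^{L-1,2}$. The point requiring care here is that a weak solution with $L-1$ Sobolev derivatives satisfies the kernel equations pointwise a.e. at the level of the jet; this uses $\lag(\B')\le L-1$.

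Next, $\operatorname{im}\Gamma=\operatorname{span}\{\mathcal W(t)\}$ in the $L^2$ sense. If $v^\top\Gamma v=\int|v^\top\mathcal W|^2\,\diff t=0$, then $v^\top\mathcal W=0$ a.e. Hence $\operatorname{im}\Gamma\subseteq\mathcal S$, with equality if and only if $\rank\Gamma=\dim\mathcal S$.

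With these facts, the equivalence is proved in two directions. For (i)$\Rightarrow$(ii), statement (i) gives $\operatorname{im}\Gamma=\mathcal S$. If $\bar w'\in\B'\cap W^{L-1,2}$, its jet $\bar{\mathcal W}(t)$ lies in $\mathcal S$ for a.e. $t$. Setting $g(t):=\Gamma^\dagger\bar{\mathcal W}(t)$ gives $g\in L^2$ and $\Gamma g=\bar{\mathcal W}$. Conversely, if $\bar{\mathcal W}=\Gamma g$, then $\bar{\mathcal W}(t)\in\mathcal S=\ker\mathcal R$ a.e. In particular the first block row gives $\sum_k R_k'(\bar w')^{(k)}=0$ a.e., and integrating by parts against test functions yields~\eqref{eq:weak2}, so $\bar w'\in\B'$. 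For (ii)$\Rightarrow$(i), take the trajectories constructed from arbitrary initial states and input jets in the dimension count. They lie in $\B'\cap W^{L-1,2}$ (smooth on the bounded interval), so by (ii) each value $\bar{\mathcal W}(t_0)$ lies in $\operatorname{im}\Gamma$. Hence $\mathcal S\subseteq\operatorname{im}\Gamma$, which forces the rank equality.

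I expect the main obstacle to be the regularity bookkeeping. In the direction (i)$\Rightarrow$(ii), a pointwise-in-$\mathcal S$ function of the form $\Gamma g$ must actually be the jet of a single function. Condition (ii) presupposes this, since the left side is given as derivatives of $\bar w'$, so only membership needs to be checked. The remaining delicate step is rigorously justifying that Sobolev weak solutions satisfy the jet constraints a.e. with the stated $L$; for this I would invoke the corresponding argument of~\cite{Schmitz2024a}.
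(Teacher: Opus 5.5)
The skeleton of your argument is sound. Your (ii)$\Rightarrow$(i) step is essentially the paper's: your $\mathcal S$ is the paper's set $\mathcal M$, i.e.\ $\Omega^{-1}\operatorname{im}\mathcal S_L$. The gap is in the load-bearing claim that $\mathcal W(t)\in\mathcal S$ for a.e.\ $t$ whenever $w'\in\B'\cap W^{L-1,2}$. You need this claim both for $\operatorname{im}\Gamma\subseteq\mathcal S$ and for the forward half of (i)$\Rightarrow$(ii). You derive it from $\mathcal S=\ker\mathcal R$, with $\mathcal R$ built from uniformly shifted copies of $[R_0'\ \cdots\ R_{\dnr}']$, and this identity is false in general. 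Uniform shifting differentiates every row only $L-1-\dnr$ times, whereas a row of lower degree $\ell_j$ produces further constraints on the jet up to $L-1-\ell_j$ differentiations. Moreover, the given $R'$ need not be of shortest lag at all. For example, take $\B'$ defined by $w_1^{(2)}=0$, $w_2=0$ (a minimal, row-reduced kernel with $\lag(\B')=\order(\B')=2$, $\inputdim(\B')=0$) and $L=3$. Your $\mathcal S$ is two-dimensional (jets of $w_1=a+bt$, $w_2=0$). But $\ker[R_0'\ R_1'\ R_2']$ imposes only $w_1^{(2)}=0$ and $w_2=0$, so it is four-dimensional; it contains the jet whose only nonzero entry is $w_2^{(1)}=1$. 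Hence knowing that the jet satisfies the differentiated kernel equations a.e.\ does not place it in $\mathcal S$. The identity does hold when a row-reduced kernel of $\B'$ has all row degrees equal to $\lag(\B')$, as happens under Assumption~\ref{ass:coprime-etc}, but Proposition~\ref{prop:FL_LTI} does not assume this.

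The repair is already in your setup, and it is what the paper does: obtain membership from the minimal input--state--output representation instead of the kernel. For $w'=\col(y',u')\in\B'\cap W^{L-1,2}$, the corresponding state satisfies $x\in W^{L,2}$ and $\Omega\mathcal W=\mathcal S_L\col(x,u',\dots,(u')^{(L-1)})$ a.e., so $\mathcal W(t)\in\Omega^{-1}\operatorname{im}\mathcal S_L=\mathcal S$. Integrated, this is exactly the paper's identity $\Omega\Gamma=\mathcal S_L\Gamma_0\mathcal S_L^\top$, which gives $\operatorname{im}\Gamma\subseteq\mathcal S$ at once. Alternatively, build $\mathcal R$ from a row-reduced kernel with rows $r_j$ of degrees $\ell_j$, including $\xi^s r_j(\xi)$ for all $s\le L-1-\ell_j$. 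Its rank is $(\dnwp-\inputdim(\B'))L-\order(\B')$, so its kernel is $\mathcal S$. In the converse half of (i)$\Rightarrow$(ii), reading $\sum_k R_k'(\bar w')^{(k)}=0$ off the jet requires kernel degree at most $L-1$. So either work with a shortest-lag representation, which has the same distributional solutions as $R'$ because the two differ by a unimodular left factor, or argue via the state. In the latter case, write $\Omega\bar{\mathcal W}=\mathcal S_L\col(\hat x,\bar u',\dots,(\bar u')^{(L-1)})$; injectivity of $\mathcal O_{L-1}$ (this is where $L\ge\lag(\B')+1$ is used) then gives $\tfrac{\diff}{\diff t}\hat x=A\hat x+B\bar u'$. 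With these repairs, your direct proof of (i)$\Rightarrow$(ii) via $g=\Gamma^\dagger\bar{\mathcal W}$ is a cleaner, self-contained substitute for the paper's route, which goes through positive definiteness of $\Gamma_0$ and an appeal to~\cite{Schmitz2024a}.
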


{Proposition~\ref{prop:FL_LTI} establishes a necessary and sufficient condition on the measured input--output data in the Gramian matrix~$\Gamma$, guaranteeing that the LTI behavior~$\B'$ with complexity~$\compl(\B')$ can be represented by the data matrix~$\Gamma$}. As we mentioned earlier, this result presents a continuous-time analog of the \emph{identifiability} result of~\cite{markovsky2022identifiability}. This connects to the fact that Proposition~\ref{prop:FL_LTI} does \emph{not} require any controllability assumption on the behavior $\B'$, contrary to the continuous-time Fundamental Lemma in~\cite{Schmitz2024a}.
With this observation, we also want to addresses the slight misuse of terminology in this work: in the data-driven literature the term ``Fundamental Lemma'' has been used interchangeably for \emph{both} the input-design condition, originally proposed in~\cite{willems2005note}, and the resulting data-driven representation itself, which has been used extensively for analysis and control. Although, we do not derive an input-design condition, we will stick to the terminology of ``Fundamental Lemma.''

\subsection{An LPV Fundamental Lemma}
{We can now generalize Proposition~\ref{prop:FL_LTI} for the considered class of LPV systems, which results in the formulation of an LPV Fundamental Lemma.}
\begin{theorem}
   \label{thm:FL_LPV}
   Let Assumption~\ref{ass:coprime-etc} hold and define $\dnwp := \dnw(1+\dnp)$. Consider $L\in\N$, with $L\geq \lag(\B)+1$,  and let $(w,p)\in\B$ such that the Gramian matrix~$\Gamma$ in \eqref{eq:gamma} for $w'=\col(w,p\otimes w)$ satisfies 
   \begin{equation}\label{eq:ranklpv}
      \rank(\Gamma) = L(\inputdim(\B)+\dnp\dnw) + \order(\B).
   \end{equation}
   Then, $(\bar w,\bar p)\in W^{L-1,2}(\intv, \R^{\dnw})\times W^{L-1,\infty}(\intv,\R^{\dnp})$ is a trajectory of $\B$ if and only if there exists $g\in L^2(\intv,\R^{L\dnwp})$ such that
   \begin{equation}\label{eq:FL_LPV}
      \begin{bmatrix}
         \bar w\\
         \bar p\otimes \bar w\\
         \vdots\\
         \bar w^{(L-1)}\\
         (\bar p\otimes \bar w)^{(L-1)}
      \end{bmatrix} = \Gamma g.
   \end{equation}
\end{theorem}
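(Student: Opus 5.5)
The plan is to reduce Theorem~\ref{thm:FL_LPV} to Proposition~\ref{prop:FL_LTI} through the LTI embedding. The bridges are Lemma~\ref{lem:embedding} and Lemma~\ref{lem:complexity}.

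First, I check that the hypotheses of Proposition~\ref{prop:FL_LTI} hold for the augmented data $w'=\col(w,p\otimes w)$. Since $(w,p)\in\B$, Lemma~\ref{lem:embedding} gives $w'\in\B'$. By Lemma~\ref{lem:complexity}, which uses Assumption~\ref{ass:coprime-etc}, we have $\lag(\B')=\lag(\B)$, so $L\geq \lag(\B)+1 = \lag(\B')+1$. The regularity $w'\in W^{L-1,2}$ is implicit in the definition of $\Gamma$ through derivatives up to order $L-1$; I take it as part of the standing assumptions on the data. Again by Lemma~\ref{lem:complexity}, the rank condition~\eqref{eq:ranklpv} reads
\[
\rank(\Gamma)=L\,\inputdim(\B')+\order(\B').
\]
This is exactly statement (i) of Proposition~\ref{prop:FL_LTI}, so statement (ii) holds: an element $\bar w'\in W^{L-1,2}(\intv,\R^{\dnwp})$ lies in $\B'$ if and only if its stacked derivatives equal $\Gamma g$ for some $g\in L^2$.

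Second, I apply this to $\bar w'=\col(\bar w,\bar p\otimes\bar w)$ for a candidate pair $(\bar w,\bar p)\in W^{L-1,2}\times W^{L-1,\infty}$. I must verify $\bar w'\in W^{L-1,2}(\intv,\R^{\dnwp})$. On a bounded interval, the product of a $W^{k,\infty}$ function and a $W^{k,2}$ function lies in $W^{k,2}$, with weak derivatives given by the Leibniz rule. This follows by induction on $k$, using the product rule for weak derivatives when one factor is Lipschitz (and $W^{1,\infty}$ functions are Lipschitz). Hence $\bar p_j\bar w\in W^{L-1,2}$. I also note that the stacked vector in~\eqref{eq:FL_LPV} is precisely $\col(\bar w',\dots,(\bar w')^{(L-1)})$, since $(\bar w')^{(i)}=\col(\bar w^{(i)},(\bar p\otimes\bar w)^{(i)})$.

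Third, I chain the equivalences. The pair satisfies $(\bar w,\bar p)\in\B$ if and only if $\bar w'\in\B'$, by Lemma~\ref{lem:embedding}; its hypotheses $\bar w\in L^2$ and $\bar p\in L^\infty$ clearly hold. Then $\bar w'\in\B'$ if and only if~\eqref{eq:FL_LPV} has a solution $g$, by Proposition~\ref{prop:FL_LTI}(ii). This proves both directions.

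I expect the main obstacle to be only the bookkeeping of regularity: the Sobolev product lemma, and making sure that the weak derivatives of $\bar p\otimes\bar w$ appearing in~\eqref{eq:FL_LPV} are the ones used in Proposition~\ref{prop:FL_LTI}. The substantive content is carried by Lemma~\ref{lem:complexity}, which I may assume.
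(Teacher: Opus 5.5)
Your proposal is correct and follows essentially the same route as the paper. It first verifies the hypotheses of Proposition~\ref{prop:FL_LTI} for $w'$ via Lemmas~\ref{lem:embedding} and~\ref{lem:complexity}, then chains $(\bar w,\bar p)\in\B \Leftrightarrow \bar w'\in\B' \Leftrightarrow$~\eqref{eq:FL_LPV}. Your explicit check that $\bar p\otimes\bar w\in W^{L-1,2}$ (Sobolev product with the Leibniz rule) is a regularity detail the paper leaves implicit.
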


Other than the continuous-time Fundamental Lemma for LTI systems \cite[Thm.~22]{Schmitz2024a}, Theorem~\ref{thm:FL_LPV} is formulated based on a rank condition on the full input--output data trajectory instead of an input-design criteria. Hence, it provides a condition on the data that is verified a posteriori. The discrete-time identifiability results for LTI and LPV systems in~\cite{markovsky2022identifiability, verhoek2026behavioral} provide necessary and sufficient conditions on the input--output data to obtain a data-driven representation. We consider showing necessity of the rank condition~\eqref{eq:ranklpv} an important future research direction (similarly for the rank condition in Proposition~\ref{prop:FL_LTI}).

{The following analysis explores the nuances of~\eqref{eq:FL_LPV}, revealing an insightful structure that is not immediately apparent.
Given a scheduling signal $\bar p\in W^{L-1, \infty}(\intv, \R^{\dnp})$, we define the scheduling-dependent {matrix-valued function with lower block-triangular structure}
\begin{equation*}
   \bar{\mathscr P_L} := \begin{bsmallmatrix} 
\binom{0}{0}\,\bar p & 0 & \cdots & 0\\[0.5em]
\binom{1}{1}\,\bar p^{(1)} & \binom{1}{0}\,\bar p & \ddots & \vdots\\[0.5em]
\vdots & \vdots & \ddots & 0 \\[0.5em]
\binom{L-1}{L-1}\,\bar p^{(L-1)} & \binom{L-1}{L-2}\,\bar p^{(L-2)} & \cdots &\binom{L-1}{0}\,\bar p %
\end{bsmallmatrix}\otimes I_\dnw,
\end{equation*}
with the binomial coefficient $\binom{n}{k}$ for nonnegative integers~$n, k$. Note that the entries of $\bar{\mathscr P_L}$ are signals, i.e., $\bar{\mathscr P_L}$ is a function of time.

The definition of~$\bar{\mathscr P_L}$ allows us to formulate the following insightful result:
\begin{corollary}\label{cor:ofthmfl_lpv}
    Let the assumptions of Theorem~\ref{thm:FL_LPV} hold and consider the partitioning of $\Gamma$,
    \begin{equation}
        \Gamma=\begin{bmatrix}
            \Gamma_{w}\\
            \Gamma_{pw}\\
            \vdots
            \\
            \Gamma_{w^{(L-1)}}\\
            \Gamma_{(pw)^{(L-1)}}
        \end{bmatrix}
    \end{equation}
    with $\Gamma_w,\Gamma_{w^{(j)}} \in \R^{\dnw \times \dnwp}$, $\Gamma_{pw}, \Gamma_{(pw)^{(j)}}\in\R^{\dnw\dnp\times \dnwp}$, $j=1,\dots, L-1$.
    Then, $(\bar w,\bar p)\in W^{L-1,2}(\intv, \R^{\dnw})\times W^{L-1,\infty}(\intv,\R^{\dnp})$ is a trajectory of $\B$ if and only if there exists a $g\in L^2(\intv,\R^{L\dnwp})$ with
    \begin{equation}
    \label{eq:rep}
        \bar w = \Gamma_{w}g, \quad (\bar p\otimes \bar w) = \Gamma_{pw} g
    \end{equation}
    satisfying the differential-algebraic equation
    \begin{equation}
    \label{eq:DAE}
       \frac{\diff}{\diff t} \left(\begin{bmatrix}
            \Gamma_{w}\\\vdots\\\Gamma_{w^{(L-2)}}
        \end{bmatrix}g\right) = \begin{bmatrix}
            \Gamma_{w^{(1)}}\\\vdots\\\Gamma_{w^{(L-1)}}
        \end{bmatrix} g
    \end{equation}
    and the compatibility condition
    \begin{equation}
    \label{eq:compat}
        \begin{bmatrix} \Gamma_{pw}\\\vdots\\\Gamma_{(pw)^{(L-1)}} 
        \end{bmatrix} g = \bar{\mathscr P_L} \begin{bmatrix}
            \Gamma_{w}\\\vdots\\\Gamma_{w^{(L-1)}}
        \end{bmatrix} g.
    \end{equation}
\end{corollary}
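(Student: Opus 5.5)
The plan is to derive the corollary directly from Theorem~\ref{thm:FL_LPV}, by showing that the single stacked identity \eqref{eq:FL_LPV} is equivalent to the three conditions \eqref{eq:rep}, \eqref{eq:DAE} and \eqref{eq:compat} for the same $g$. Throughout, write $\Gamma^{w}:=\col(\Gamma_w,\dots,\Gamma_{w^{(L-1)}})$ and $\Gamma^{pw}:=\col(\Gamma_{pw},\dots,\Gamma_{(pw)^{(L-1)}})$. The block rows of \eqref{eq:FL_LPV} then read $\bar w^{(k)}=\Gamma_{w^{(k)}}g$ and $(\bar p\otimes\bar w)^{(k)}=\Gamma_{(pw)^{(k)}}g$ for $k=0,\dots,L-1$. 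The only calculus needed is the general Leibniz rule for weak derivatives: for $\bar p\in W^{L-1,\infty}$ and $\bar w\in W^{L-1,2}$,
\[
(\bar p\otimes\bar w)^{(k)}=\sum_{i=0}^k\tbinom{k}{i}\,\bar p^{(k-i)}\otimes\bar w^{(i)} .
\]
Stacking these identities for $k=0,\dots,L-1$ is precisely the product $\bar{\mathscr P_L}\col(\bar w,\dots,\bar w^{(L-1)})$, once the entries $\bar p^{(j)}$ in $\bar{\mathscr P_L}$ are read as $\bar p^{(j)}\otimes I_\dnw$.

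For necessity, let $(\bar w,\bar p)\in\B$. Theorem~\ref{thm:FL_LPV} supplies $g$ with \eqref{eq:FL_LPV}. The rows with $k=0$ give \eqref{eq:rep}. The rows $\bar w^{(k)}=\Gamma_{w^{(k)}}g$ show that $\Gamma_{w^{(k)}}g$ has weak derivative $\Gamma_{w^{(k+1)}}g$ for $k\le L-2$, which is \eqref{eq:DAE}. Substituting the rows into the Leibniz identity gives $\Gamma^{pw}g=\bar{\mathscr P_L}\Gamma^w g$, which is \eqref{eq:compat}.

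For sufficiency, suppose $g\in L^2$ satisfies \eqref{eq:rep}--\eqref{eq:compat}. Set $v_k:=\Gamma_{w^{(k)}}g$, so $v_0=\bar w$ by \eqref{eq:rep}. By induction using \eqref{eq:DAE}, $v_k=\bar w^{(k)}$ in the weak sense for $k=0,\dots,L-1$. Then \eqref{eq:compat} together with the Leibniz identity yields $\Gamma_{(pw)^{(k)}}g=\sum_i\binom{k}{i}\bar p^{(k-i)}\otimes\bar w^{(i)}=(\bar p\otimes\bar w)^{(k)}$. Hence every block row of \eqref{eq:FL_LPV} holds, and Theorem~\ref{thm:FL_LPV} gives $(\bar w,\bar p)\in\B$. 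Note that the condition $(\bar p\otimes\bar w)=\Gamma_{pw}g$ in \eqref{eq:rep} is already the first block row of \eqref{eq:compat}, so it is redundant but harmless.

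The main obstacle is the regularity bookkeeping. We must check that \eqref{eq:DAE} is meaningful, i.e.\ that $\Gamma_{w^{(k)}}g$ has a weak derivative in $L^2$. This holds because the right-hand side $\Gamma_{w^{(k+1)}}g$ is in $L^2$, since $\Gamma$ is a constant matrix. We must also justify the Leibniz rule for products of $W^{L-1,\infty}$ and $W^{L-1,2}$ functions on a bounded interval. Both are standard, via mollification or the product rule in Sobolev spaces, and the remaining manipulations are purely algebraic.
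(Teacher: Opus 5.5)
Your proposal is correct and follows the paper's proof. Like the paper, you reduce the corollary to the equivalence of \eqref{eq:FL_LPV} with \eqref{eq:rep}--\eqref{eq:compat} for the same $g$, using the block-row split and the generalized Leibniz rule, and then invoke Theorem~\ref{thm:FL_LPV}. Your explicit induction for the converse and your Sobolev product-rule remarks simply spell out what the paper summarizes as ``follows directly from the above derivations.''
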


This result allows us to draw interesting parallels with the discrete-time case studied in \cite{verhoek2021fundamental}, which we collect in the following remarks. 
\begin{remark}
    {As a consequence of the embedding $\B$ into the continuous-time LTI behavior $\B'$, the data-driven description of trajectories of system~\eqref{eq:kernel} via \eqref{eq:rep}--\eqref{eq:compat} is not restricted to the specific time interval $\intv$. Given a different time interval~$\tilde \intv$ and the corresponding behavior~$\tilde\B$ generated by the dynamics~\eqref{eq:kernel} on $\tilde\intv$, the same data matrix $\Gamma$ obtained from a trajectory on $\intv$ can be used to describe trajectories on $\tilde\intv$, cf.\ \cite[Rem.~7]{Schmitz2026}, where the parameter function $g$ is now defined on $\tilde\intv$.}
\end{remark}

\begin{remark}
    The discrete-time analog to~$\bar{\ms{P}}_L$ is a matrix~$\bar{\mc{P}}_L$ with block-\emph{diagonal} structure, containing the scheduling sequence, see \cite[Eq.~(24)]{verhoek2026behavioral}. For a fixed scheduling sequence~$\bar{p}$, the associated~$\bar{\mc{P}}_L$ is a constant matrix and allows for a characterization of the set of all $\bar p$-compatible input--output trajectories, cf.\ $\B_{\bar p}$ in Section~\ref{s:prob}. %
    In contrast,~$\bar{\ms{P}}_L$ is a matrix-valued function of time with lower block-triangular structure populated with derivatives of~$\bar p$ below the diagonal. %
    An analog description of the $\bar p$-compatible trajectories in $\B_{\bar p}$ for the continuous-time case %
    would incur that we project the solution space of the DAE \eqref{eq:DAE} %
    onto the space of all functions that satisfy~\eqref{eq:compat}. In the discrete-time case, this kind of projection is achieved using straight-forward linear algebra (projection of the image of a Hankel matrix onto the basis of a null space, which can easily be computed for a given~${\bar p}$, see \cite[Eqs.~(26)--(28)]{verhoek2026behavioral}). The resulting subspace has dimension $\order(\B) + L \inputdim(\B)$, emphasizing the linearity of~$\B$ along a given scheduling trajectory. Achieving a similar condition for the continuous-time case is an additional important topic for future work.
\end{remark}
}

\begin{remark}
    {Finally, we acknowledge that a major challenge for the application of Theorem~\ref{thm:FL_LPV} is, that the data-matrix~$\Gamma$ involves higher-order derivatives of the input, output and scheduling signals, which are typically sensitive to noise. To address this issue, an efficient and robust technique using algebraic differentiators is proposed in~\cite{othmane2026data}.}
\end{remark}

\section{Conclusion and future works}\label{s:conc}
We generalized the LPV Fundamental Lemma for continuous-time LPV systems that can be represented by a kernel representation with a Leibniz-affine scheduling dependence. {To the best of our knowledge, this result provides the first behavioral} data-driven representation for continuous-time systems, \emph{beyond the class of LTI systems}. To achieve this result, we first formulated an identifiability condition for continuous-time LTI systems, which followed from reformulating the continuous-time LTI Fundamental Lemma. A temporarily introduced LTI embedding for the considered class of LPV systems allowed us to generalize the LTI identifiability condition for LPV systems. 

{Future work will focus on formulating necessary conditions for the identifiability results and extending the presented results to a broader class of continuous-time LPV systems. Furthermore, to make our results actionable and apply them in, e.g., data-driven optimal control or data-driven simulation, robust numerical methods must be developed to efficiently and reliably solve the differential-algebraic equation~\eqref{eq:DAE} subject to the compatibility condition~\eqref{eq:compat}.} {These methods will also enable the demonstration of our methods in realistic, numerical examples.}

\section*{Acknowledgements}

We thank the anonymous reviewers for their comments, in particular for pointing out a (now corrected) issue in the proof of Lemma~\ref{lem:complexity}.

\bibliographystyle{IEEEtran}
\bibliography{ref}

\arxivver{}{

\appendix

\subsection{Proof of Lemma~\ref{lem:closed}}
Let $( w_k, p_k)_{k\in\N} \subset \B$ be a sequence of trajectories and scheduling signals such that $(w_k, p_k) \rightarrow (\hat w,\hat p)$ in $L^2(\intv, \R^\dnw) \times L^\infty(\intv, \R^{\dnp})$. We show that $(\hat w,\hat p)\in\B$. Consider any $\phi\in\mathcal C_{\mathrm c}^\infty(\intv,\R^\dnk)$. Then, the left-hand side of~\eqref{eq:weak} defines a continuous (nonlinear) functional $f_\phi:L^2(\intv, \R^\dnw) \times L^\infty(\intv, \R^{\dnp})\rightarrow \R$, $(w,p) \mapsto f_{\phi}(w,p)$. By continuity, we have that $f_\phi(\hat w,\hat p) = \lim_{k\rightarrow\infty} f_\phi(w_k, p_k) = 0$. Since $\phi$ was arbitrary, we conclude that $(\hat w,\hat p)\in\B$. \qed 

\subsection{Proof of Lemma~\ref{lem:embedding}}
Let $(w,p)\in\B$. Note that $w'=\col(w,p\otimes w)\in L^2(\intv, \R^{\dnw(1+\dnp)})$. Let $\phi$ be any test function in $\mathcal C_{\mathrm c}^\infty(\intv,\R^\dnk)$. Then, \eqref{eq:weak} holds and implies \eqref{eq:weak2}. This yields $w' \in \B'$.%
   
Conversely, let $w'=\col(w,v)\in \B'$ with $w\in L^2(\intv,R^{\dnw})$ and $v=p\otimes w$ for some $p\in L^\infty(\intv, \R^{\dnp})$. Then, \eqref{eq:weak2} holds for all $\phi\in\mathcal C_{\mathrm c}^\infty(\intv,\R^\dnk)$ and, hence, \eqref{eq:weak} holds. Therefore, $(w,p)\in\B$. \qed 

\subsection{Proof of Lemma~\ref{lem:complexity}}\label{app:proof:lem:complexity}
Before we give the formal proof of Lemma~\ref{lem:complexity}, we need a \emph{direct realization} of the input--output form~\eqref{eq:io} as a LPV state-space representation. We achieve this through the realization of~\eqref{eq:io} as a minimal LPV state-space realization with static scheduling dependence, for which we use the cut-and-shift method~\cite{rolandbook}. To smoothen notation, we will exclusively use the indeterminant~$\xi$, associated with~$\tfrac{\diff}{\diff t}$, and reverse operation~$\xi^{-1}$, associated with integration over time. %

Integrate both sides of~\eqref{eq:io} $\dnr$ times, which under Assumption~\ref{ass:coprime-etc} %
gives:
\begin{multline}\textstyle
    y  %
    + \textstyle\sum_{i=0}^{\dnr-1}A_{i,0}\xi^{i-\dnr} y + \sum_{j=1}^{\dnp}A_{i,j}\xi^{i-\dnr}(p_jy) \\  
    = \textstyle\sum_{i=0}^{\dnr-1}B_{i,0}\xi^{i-\dnr}u + \sum_{j=1}^{\dnp}B_{i,j}\xi^{i-\dnr}(p_ju) \\
    + \textstyle B_{\dnr,0}u + \sum_{j=1}^{\dnp}B_{\dnr,j} p_j u. 
\end{multline}
Inspired by~\cite{toth2011state}, define $x_1\in(\R^{\dny})^\mb{T}$ as:
\begin{multline}
     \textstyle x_1:=\sum_{i=0}^{\dnr-1}B_{i,0}\xi^{i-\dnr}u + \sum_{j=1}^{\dnp}B_{i,j}\xi^{i-\dnr}(p_ju) \\
     \textstyle - \sum_{i=0}^{\dnr-1}\left(A_{i,0}\xi^{i-\dnr} y + \sum_{j=1}^{\dnp}A_{i,j}\xi^{i-\dnr}(p_jy)\right),
\end{multline}
such that 
\begin{align}
    y & = x_1 + B_{\dnr,0}u + {\textstyle\sum_{j=1}^{\dnp}B_{\dnr,j}} p_j u %
    \label{eq:outputequationstate}\\ 
    \xi x_1  & = x_2 - A_{\dnr-1,0:\dnp}\begin{bsmallmatrix} 1 \\ p \end{bsmallmatrix} \otimes y + B_{\dnr-1,0:\dnp}\begin{bsmallmatrix} 1 \\ p \end{bsmallmatrix} \otimes u,
\end{align}
with $A_{\dnr-1,0:\dnp} = \begin{bmatrix} A_{\dnr-1,0} & \cdots & A_{\dnr-1,\dnp} \end{bmatrix}$, similarly for $B_{\dnr-1,0:\dnp}$, and 
\begin{multline*}
    \textstyle x_2:=\sum_{i=0}^{\dnr-2}B_{i,0}\xi^{i-\dnr+1}u + \sum_{j=1}^{\dnp}B_{i,j}\xi^{i-\dnr+1}(p_ju) \\
    \textstyle - \sum_{i=0}^{\dnr-2}\left(A_{i,0}\xi^{i-\dnr+1} y + \sum_{j=1}^{\dnp}A_{i,j}\xi^{i-\dnr+1}(p_jy)\right).
\end{multline*}
Repeating this \emph{cut-and-shift} method in the continuous-time case results until $\xi x_{\dnr}=B_{0,:}\begin{bsmallmatrix} 1 \\ p \end{bsmallmatrix} \otimes u-A_{0,:}\begin{bsmallmatrix} 1 \\ p \end{bsmallmatrix} \otimes y $. This construction yields the first order differential equation in~\eqref{eq:odess}. 
We can then partially substitute~$y$ as defined in~\eqref{eq:outputequationstate} in the equations for $\xi x_i$ in~\eqref{eq:odess} for only the outputs that are \emph{not} multiplied with the scheduling signal~$p$, which results in the LPV state-space realization with \emph{static} scheduling dependence shown in~\eqref{eq:odess2}. 
\begin{figure*}%
    \begin{subequations}\label{eq:odess}
    \begin{align}
    \xi \begin{bmatrix} x_1 \\ x_2 \\ \vdots \\ x_{\dnr-1}\\ x_{\dnr} \end{bmatrix} &= \begin{bmatrix} 0 & I & \cdots  & 0 & 0\\ 0 & 0 & I & \cdots & 0 \\ \vdots & \ddots & \ddots & \ddots & \vdots \\ 0 & \cdots  & \cdots & 0 & I \\ 0 & 0 & \cdots & \cdots & 0 
    \end{bmatrix}\begin{bmatrix} x_1 \\ x_2 \\ \vdots \\ x_{\dnr-1}\\ x_{\dnr} \end{bmatrix}   - \begin{bmatrix} A_{\dna-1,0} \\ A_{\dna-2,0} \\ \vdots \\ A_{1,0}\\ A_{0,0} \end{bmatrix} y   + 
    \begin{bmatrix}
    -A_{\dna-1,1:\dnp} & B_{\dna-1,0} & B_{\dna-1,1:\dnp} \\ 
-A_{\dna-2,1:\dnp} & B_{\dna-2,0} & B_{\dna-2,1:\dnp} \\
\vdots & \vdots &\vdots\\
-A_{1,     1:\dnp} & B_{1,0}&B_{1,1:\dnp} \\ 
-A_{0,     1:\dnp} & B_{0,0}&B_{0,1:\dnp} \end{bmatrix}\begin{bmatrix}
    p\otimes y \\ u \\ p\otimes u
\end{bmatrix}, \\
y & = x_1 + \begin{bmatrix} %
	B_{\dna,0} & B_{\dna,1:\dnp} 
\end{bmatrix}\begin{bmatrix} %
u \\ p\otimes u
\end{bmatrix}. 
\end{align}
\end{subequations}
\hrule

\smallskip

\begin{subequations}\label{eq:odess2}
\begin{align}
    \xi \begin{bmatrix} x_1 \\ x_2 \\ \vdots \\ x_{\dnr-1}\\ x_{\dnr} \end{bmatrix} &= \begin{bmatrix} -A_{\dna-1,0} & I & \cdots  & 0 & 0\\ -A_{\dna-2,0} & 0 & I & \cdots & 0 \\ \vdots & \ddots & \ddots & \ddots & \vdots \\ -A_{1,0} & \cdots  & \cdots & 0 & I \\ -A_{0,0} & 0 & \cdots & \cdots & 0 
    \end{bmatrix}\begin{bmatrix} x_1 \\ x_2 \\ \vdots \\ x_{\dnr-1}\\ x_{\dnr} \end{bmatrix}   + \begin{bmatrix}
    -A_{\dna-1,1:\dnp} & \hat B_{\dna-1,0} & \hat B_{\dna-1,1:\dnp} \\ 
-A_{\dna-2,1:\dnp} & \hat B_{\dna-2,0} & \hat B_{\dna-2,1:\dnp} \\
\vdots & \vdots & \vdots\\
-A_{1,     1:\dnp} & \hat B_{1,0}&\hat B_{1,1:\dnp} \\ 
-A_{0,     1:\dnp} & \hat B_{0,0}&\hat B_{0,1:\dnp} \end{bmatrix}\begin{bmatrix}
    p\otimes y \\ u \\ p\otimes u
\end{bmatrix}, \\
y & = x_1 + \begin{bmatrix} %
	B_{\dna,0} & B_{\dna,1:\dnp} \end{bmatrix}\begin{bmatrix} %
	u \\ p\otimes u
\end{bmatrix},\label{eq:odess2:output} \\
\text{where:} & \qquad  \hat B_{i,0} = B_{i,0} - A_{i,0}B_{\dnr,0}, \qquad \hat B_{i,1:\dnp} = B_{i,1:\dnp} - A_{i,0}B_{\dnr,1:\dnp} \notag
\end{align}
\end{subequations}
\hrule
\end{figure*}
{Note that the state equation of the realization can be made independent from the output by substituting~\eqref{eq:odess2:output} in the~$p\otimes y$ term, yielding an LPV state-space representation of the form $\xi x = A(p)x + B(p)u$; $y = C(p)x + D(p)u$, affine in~$p$. The state-space realization can be made minimal using an LPV Kalman decomposition~\cite{petreczky2016realization}. Importantly, the LPV Kalman decomposition uses a \emph{constant} projection matrix that projects the state to a lower dimension, i.e., the LPV state-space representation retains the affine dependency structure.} It follows from Assumption~\ref{ass:coprime-etc} and the proof of~\cite[Lem.~1, Lem.~3]{verhoek2026behavioral} that the rank of the observability matrix of~\eqref{eq:odess2} is scheduling-independent. Hence, the observability matrix of~\eqref{eq:odess2} will achieve rank $\order(\B)$ after $\lag(\B)$ steps, \emph{for any $p$---including $p\equiv0$}. {Thus,~\eqref{eq:odess2} is observable in the sense of~\cite{petreczky2016realization, petreczky2023minimal}. Then, by~\cite[Thm.~2]{petreczky2023minimal}, the (state-minimal) LPV Kalman decomposition of~\eqref{eq:odess2} is a minimal realization of the behavior~$\B$, i.e., it has a state dimension of~$\order(\B)$.} %
In the remainder, we assume w.l.o.g. that the state dimension of~\eqref{eq:odess2} is~$\order(\B)$, meaning that it is complete state-observable~\cite[Chap.~3,4]{rolandbook}. We are now ready to give the proof for Lemma~\ref{lem:complexity}:

\begin{proof}
  We first argue that~$\inputdim(\B) + \dnp \dnw=\inputdim(\B')$. Under the assumption that the signal $p\otimes w$ is free, we directly see that $\inputdim(\B')=\inputdim(\B) + \dnp \dnw$. {Moreover, as we argued before, the observability matrix of~\eqref{eq:odess2} achieves rank $\order(\B)$ after $\lag(\B)$ steps for any $p$. Hence, for a free~$p\otimes w$, the observability matrix of the LTI state-space realization resulting from~\eqref{eq:odess2} achieves rank $\order(\B)$ after $\lag(\B)$. Therefore, we can conclude $\order(\B)=\order(\B')$ and $\lag(\B) = \lag(\B')$.}
\end{proof}

\subsection{Proof of Proposition~\ref{prop:FL_LTI}}\label{app:proof:prop:FL_LTI}
   We consider a minimal input--state--output representation of the LTI behavior $\B'$ of the form
   \begin{equation}\label{eq:pf:ltiss}
   x^{(1)} = Ax + Bu',\quad y' = Cx + Du',
   \end{equation}
   such that $w' = \col(y',u')$ for $w'\in\B'$. {Minimality implies that the dimension of the state variable $x$ equals $\order(\B)$.} Let
   \begin{equation}
      \mathcal O_{k} = \begin{bmatrix}
         C\\
         CA\\
         \vdots\\
         CA^{k-1}
      \end{bmatrix},\quad \mathcal T_k = \begin{cases}
         D, & k=1\\
         \begin{bmatrix}
            D\\
            \mathcal O_{k-1}B & \mathcal T_{k-1}
         \end{bmatrix}, & k\geq 2,
      \end{cases}
   \end{equation}
   where $\mathcal O_k$ is a Kalman observability matrix and the block-Toeplitz matrix $\mathcal T_k$ contains the Markov parameters. Define
   \begin{equation}
      \mathcal S_L = \begin{bmatrix}
         \mathcal O_L & \mathcal T_L\\
         0 & I_{L \inputdim(\B')}
      \end{bmatrix}.
   \end{equation}
   {As~\eqref{eq:pf:ltiss} is observable (see \cite[Sec.~6.5.3]{yellowbook}) and $L\geq\lag(\B)+1$,} the matrix $\mathcal S_L\in\R^{L \dnwp \times (\order(\B')+L\inputdim(\B'))}$ has full column rank. Let $x$ be the state trajectory corresponding to $w'=\col(y',u')$. Then, it is easy to see that the symmetric, positive semi-definite  Gramian
   \begin{equation}
      \Gamma_0 = {\textstyle\int_\intv} \mc{W}_0(t)\mc{W}_0(t)^\top\,\diff t,%
   \end{equation}
   with $\mc{W}_0 =\col(x, u, \dots, u^{(L-1)})$, 
   satisfies 
   \(
      \Omega\,\Gamma = \mathcal S_L \Gamma_0 \mathcal S_L^\top,
   \)
   where $\Omega$ is an invertible permutation matrix that separates (the derivatives of)~$w'$ in terms of (the derivatives of) inputs and outputs, i.e., 
   \[ \Omega \col\begin{psmallmatrix}w',\dots,(w')^{(L-1)}\end{psmallmatrix} = \col\begin{psmallmatrix}y',\dots, (y')^{(L-1)}, u',\dots, (u')^{(L-1)}\end{psmallmatrix}. \]
   First, we show that (i) is equivalent to the statement
   \begin{itemize}
       \item[] (iii) $\operatorname{rank}\Gamma_0=L\inputdim(\B')+\order(\B')$.
   \end{itemize}
   Assume (i) holds, i.e., $\rank(\Gamma) = L\inputdim(\B') + \order(\B')$. Then,
   \begin{equation*}
      \rank(\Gamma) = \rank(\Omega\,\Gamma) = \rank(\mathcal S_L \Gamma_0 \mathcal S_L^\top) \leq \rank(\Gamma_0).
   \end{equation*}
   Since, $\Gamma_0\in\R^{(\order(\B')+L\inputdim(\B')) \times (\order(\B')+L\inputdim(\B'))}$ the statement~(iii) directly follows. Assume now that (iii) holds, i.e., $\Gamma_0$ is positive-definite. Then, $\operatorname{ker}(\Gamma) = \operatorname{ker}(\mathcal S_L\Gamma_0\mathcal S_L^\top) = \operatorname{ker}(\mathcal S_L)$. Taking the orthogonal complements yields $\operatorname{im}(\Gamma) = \operatorname{im}(\mathcal S_L^\top)$ and, thus, $\operatorname{rank}(\Gamma)=\operatorname{rank}(\mathcal S_L) = L\inputdim(\B')+\order(\B')$, i.e., statement~(i) holds.

   By adapting the proof of \cite[Thm.~22]{Schmitz2024a}, where the persistency of excitation condition is directly replaced with condition (iii), one sees that (iii) implies~(ii).

    Finally, we show that (ii) implies (i). Consider an arbitrary state--input trajectory $(\bar x,\bar u')\in W^{L-1,\infty}(\intv,\R^{\order(\B')})\times W^{L-1,\infty}(\intv,\R^{\inputdim(\B')})$ from the state-space representation~\eqref{eq:pf:ltiss}. The corresponding input--output trajectory $\bar w'=\col(\bar y',\bar u')$ is an element of $\B'\cap W^{L-1,\infty}(\intv,\R^{\dnwp})$ and satisfies
    \begin{equation}
    \label{eq:wuyx}
        \Omega\begin{bmatrix}
            \bar w'\\
            \vdots\\
            (\bar w')^{(L-1)}
        \end{bmatrix} = \begin{bmatrix}
            \bar y'\\\vdots\\(\bar y')^{(L-1)}\\\bar u'\\\vdots\\(\bar u')^{(L-1)}
        \end{bmatrix} = \mathcal S_L \begin{bmatrix}
            \bar x\\\bar u'\\\vdots\\(\bar u')^{(L-1)}\end{bmatrix}.
    \end{equation}
    Since, $\col(\bar x, \bar u',\dots, (\bar u')^{(L-1)})$ has $L\inputdim(\B')+\order(\B')$ degrees of freedom and $\mathcal S_L$ is injective, together with \eqref{eq:wuyx} the linear subspace of $\R^{L\dnwp}$
    \begin{equation*}
        \mathcal M =\left\{\begin{bsmallmatrix}
            \bar w'(t)\\
            \vdots\\
            (\bar w')^{(L-1)}(t)
        \end{bsmallmatrix}\,\middle|\,\begin{aligned}
        &\bar w' \in \B'\cap W^{L-1,\infty}(\intv,\R^{\dnwp}),\\ &t\in\intv\end{aligned}
        \right\}
    \end{equation*}
    has dimension $L\inputdim(\B')+\order(\B')$.
    Then, by statement~(ii), $\mathcal M\subset \operatorname{im}(\Gamma)$, which yields $\operatorname{rank}(\Gamma) \geq L\inputdim(\B')+\order(\B')$. By $\Omega \Gamma = \mathcal S_L\Gamma_0\mathcal S_L^\top$, we find
    $\operatorname{rank}(\Gamma) =\operatorname{rank}(\Omega\Gamma)\leq \operatorname{rank}(\mathcal S_L)= L\inputdim(\B')+\order(\B')$, which implies (i). 
    
    In summary, we have show the equivalence of the statements (i), (ii), and (iii), which completes the proof.\qed

\subsection{Proof of Theorem~\ref{thm:FL_LPV}}
   We have that $w'=\col(w,p\otimes w)\in\B'$ and, by Lemma~\ref{lem:complexity}, the Gramian $\Gamma$ in~\eqref{eq:gamma} satisfies $\rank(\Gamma) = L\inputdim(\B')  + \order(\B')$ with $L\geq \lag(\B') + 1$. Let $(\bar w,\bar p)\in W^{L-1,2}(\intv, \R^{\dnw})\times W^{L-1,\infty}(\intv,\R^{\dnp})$ and set $\bar w' = \col(\bar w, \bar p\otimes \bar w)$.
   
   First, we show that $(\bar w,\bar p)\in\B$ implies \eqref{eq:FL_LPV}. Let $(\bar w,\bar p)\in\B$. By Lemma~\ref{lem:embedding} we have $\bar w'\in\B'$ and by Proposition~\ref{prop:FL_LTI} there exists $g\in L^2(\intv, \R^{L\dnw(1+\dnp)})$ such that \eqref{eq:FL_LTI} holds. Since $(\bar w')^{(k)}=\col(\bar w^{(k)}, (\bar p\otimes \bar w)^{(k)})$ for $k=0,\dots, L-1$, we have~\eqref{eq:FL_LPV}.

   Now, we show that \eqref{eq:FL_LPV} implies $(\bar w,\bar p)\in\B$. Let \eqref{eq:FL_LPV} hold for some $g\in L^2(\intv, R^{L\dnw(1+\dnp)})$. Then \eqref{eq:FL_LPV} yields \eqref{eq:FL_LTI} and, hence, $\bar w'\in\B'$ by Proposition~\ref{prop:FL_LTI}. By Lemma~\ref{lem:embedding}, we have that $(\bar w,\bar p)\in\B$. {Thus, $(\bar w,\bar p)\in\B$ if and only if \eqref{eq:FL_LPV} holds}. \qed

\subsection{Proof of Corollary~\ref{cor:ofthmfl_lpv}}
    It suffices to show the equivalence of \eqref{eq:FL_LPV} to the set of equations \eqref{eq:rep}--\eqref{eq:compat}. First, we show that \eqref{eq:FL_LPV} implies \eqref{eq:rep}--\eqref{eq:compat}. Given \eqref{eq:FL_LPV}, we find
    \begin{align}
    \label{eq:DAE2}
        \begin{bmatrix}
            \bar w\\\vdots\\\bar w^{(L-1)}
        \end{bmatrix} &= \begin{bmatrix}
        \Gamma_{w}\\\vdots\\\Gamma_{w^{(L-1)}}
        \end{bmatrix} g,\\ 
        \label{eq:compat2}
        \begin{bmatrix}
            \bar p \otimes \bar w \\\vdots\\ (\bar p\otimes \bar w)^{(L-1)}
        \end{bmatrix} &= \begin{bmatrix} \Gamma_{pw}\\\vdots\\\Gamma_{(pw)^{(L-1)}} 
        \end{bmatrix} g.
    \end{align}
    This directly yields \eqref{eq:rep}. From~\eqref{eq:DAE2}, we have
    \begin{equation}
        \frac{\diff}{\diff t} (\Gamma_{w^{(j)}} g) = \bar w^{(j+1)} = \Gamma_{w^{(j+1)}} g
    \end{equation}
    for all $j=0,\dots, L-2$, where $\Gamma_{w^{(0)}}=\Gamma_w$, which directly implies~\eqref{eq:DAE}. Moreover,
    using the generalized Leibniz rule, we find
    \begin{equation}
        (\bar p\otimes \bar w)^{(n)} = \sum_{k=0}^n \binom{n}{k} \left( \bar p^{(k)} \otimes I_{\dnw}\right) \bar w^{(n-k)}
    \end{equation}
    for $n=0,\dots, L-1$. Consequently, the left-hand side in~\eqref{eq:compat2} can be expressed as
    \begin{equation*}
        \begin{bmatrix}
            \bar p \otimes \bar w \\\vdots\\ (\bar p\otimes \bar w)^{(L-1)}
        \end{bmatrix} = \bar{\mathscr P_L} \begin{bmatrix}
            \bar w\\\vdots\\\bar w^{(L-1)}
        \end{bmatrix} = \bar{\mathscr P_L}\begin{bmatrix}
        \Gamma_{w}\\\vdots\\\Gamma_{w^{(L-1)}}
        \end{bmatrix} g,
    \end{equation*}
    where we substituted~\eqref{eq:DAE2} to obtain~\eqref{eq:compat}. The reverse implication that \eqref{eq:rep}--\eqref{eq:compat} implies \eqref{eq:FL_LPV} follows directly from the above derivations. \qed
}
\end{document}